\newtheorem*{proposition*}{Proposition}
\title{Principled Coarse-Grained Acceptance \\  for Speculative Decoding in Speech}
\name{}
\address{}
\name{Moran Yanuka$^{1,2}$, Paul Dixon$^{1}$, Eyal Finkelshtein$^{1}$, Daniel Rotman$^{1}$, Raja Giryes$^{2}$} 
\address{%
$^1$ Apple, 
$^2$ Tel-Aviv University%
}
\begin{document}

\maketitle
\newcommand{\targetdist}{p}
\newcommand{\draftdist}{q}
\newcommand{\targetprob}[1]{\targetdist(#1)}
\newcommand{\draftprob}[1]{\draftdist(#1)}
\newcommand{\targetgroupprob}[1]{P(#1)}
\newcommand{\draftgroupprob}[1]{Q(#1)}
\newcommand{\embfunc}[2]{\mathrm{Emb}_{#1}(#2)}
\newcommand{\seg}[1]{\mathcal{G}(#1)}
\newcommand{\vocab}{\mathcal{V}}

\newcommand{\paul}[1]{\textcolor{purple}{[Paul: #1]}}

\newcommand{\moran}[1]{\textcolor{orange}{[Moran: #1]}}

\newcommand{\todo}[1]{\textcolor{red}{[TODO: #1]}}

\begin{abstract}
Speculative decoding accelerates autoregressive speech generation by letting a fast draft model propose tokens that a larger target model verifies. However, for speech LLMs that generate acoustic tokens, exact token matching is overly restrictive: many discrete tokens are acoustically or semantically interchangeable, reducing acceptance rates and limiting speedups. We introduce Principled Coarse-Graining (PCG), which verifies proposals at the level of Acoustic Similarity Groups (ASGs) derived from the target model’s embedding space. By splitting each token’s probability mass across the overlapping groups that contain it, we define an overlap-aware coarse-grained distribution and perform rejection sampling on the resulting group variable. This yields an exactness guarantee at the group level while allowing the accepted draft token to stand in for any member of the group in practice. On LibriTTS, PCG increases acceptance and throughput relative to standard speculative decoding and prior speech-specific relaxations while maintaining intelligibility and speaker similarity. These results suggest acoustically aware, group-level acceptance as a simple and general way to accelerate speech token generation while maintaining speech quality.
\end{abstract}

\begin{keywords}
Speculative Decoding, Text-to-Speech
\end{keywords}

\section{Introduction}

Speculative decoding (SD) is a two-model approach for accelerating autoregressive decoding in large language models \cite{chen2023acceleratinglargelanguagemodel, leviathan2023fastinferencetransformersspeculative}. A small, fast draft model proposes token sequences that a larger, more accurate target model subsequently verifies or corrects. SD can deliver large wall-clock speedups while provably preserving the target model’s output distribution. This property is especially valuable for on-device TTS, where strict latency budgets and offline operation for privacy constrain compute.

\begin{figure}[t]
    \centering
    \includegraphics[width=1\columnwidth]{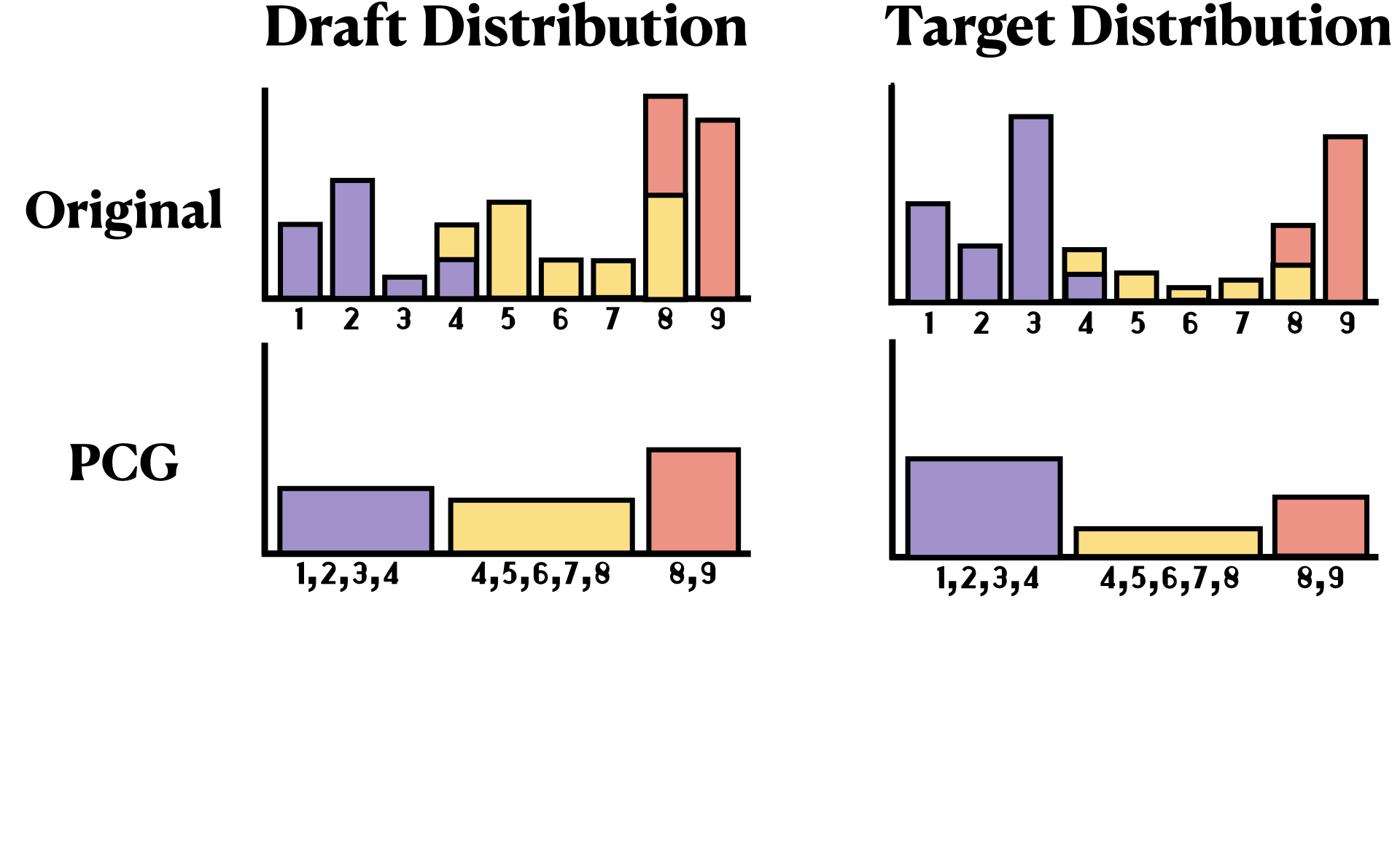}
    \vspace{-5.5em}
    \caption{Token-level probabilities used for the SD acceptance criterion: original vs PCG. Colors denote Acoustic Similarity Groups (ASGs); tokens may belong to multiple groups.}
    \label{fig:method}

\end{figure}

Applying SD to autoregressive models over speech tokens poses distinct challenges. Many speech tokens are acoustically or phonetically interchangeable, so different tokens can yield perceptually similar audio~\cite{lin25h_interspeech,liu-etal-2025-analyzing,10888194}. Because standard SD accepts only exact token matches, it fails to exploit this many-to-one mapping and rejects draft tokens that are perceptually valid. The resulting low acceptance rates can erase the speed gains from accepted tokens.

Prior adaptations of SD to speech synthesis address low acceptance by restricting sampling. Methods such as sequence-level Viterbi search~\cite{nguyen2025accelerating} and tolerance-based validation~\cite{10888194} typically confine decoding to small top-k pools, which reduces output diversity~\cite{zhou-etal-2025-balancing}. SSD~\cite{lin25h_interspeech} instead increases acceptance by adding a constant probability bias, enabling long-tail sampling but ignoring acoustic similarity and risking erroneous acceptances. For visual AR models, LANTERN~\cite{Jang2025LANTERN} relaxes acceptance by aggregating probability over latent neighbors and bounding distortion via total variation distance. In contrast, we introduce an explicit coarse-grained random variable and run acceptance/rejection on its induced distribution, guaranteeing exact sampling at the coarse level. While coarse-graining is well studied in statistical physics and explored in ML~\cite{schweitzer2021survey,stephan2025coarse,tian2024visual}, we are not aware of SD methods that define a group variable with overlap-aware acceptance and perform rejection sampling on its valid induced distribution.

\begin{figure*}[t]
    \centering
    \includegraphics[width=\textwidth]{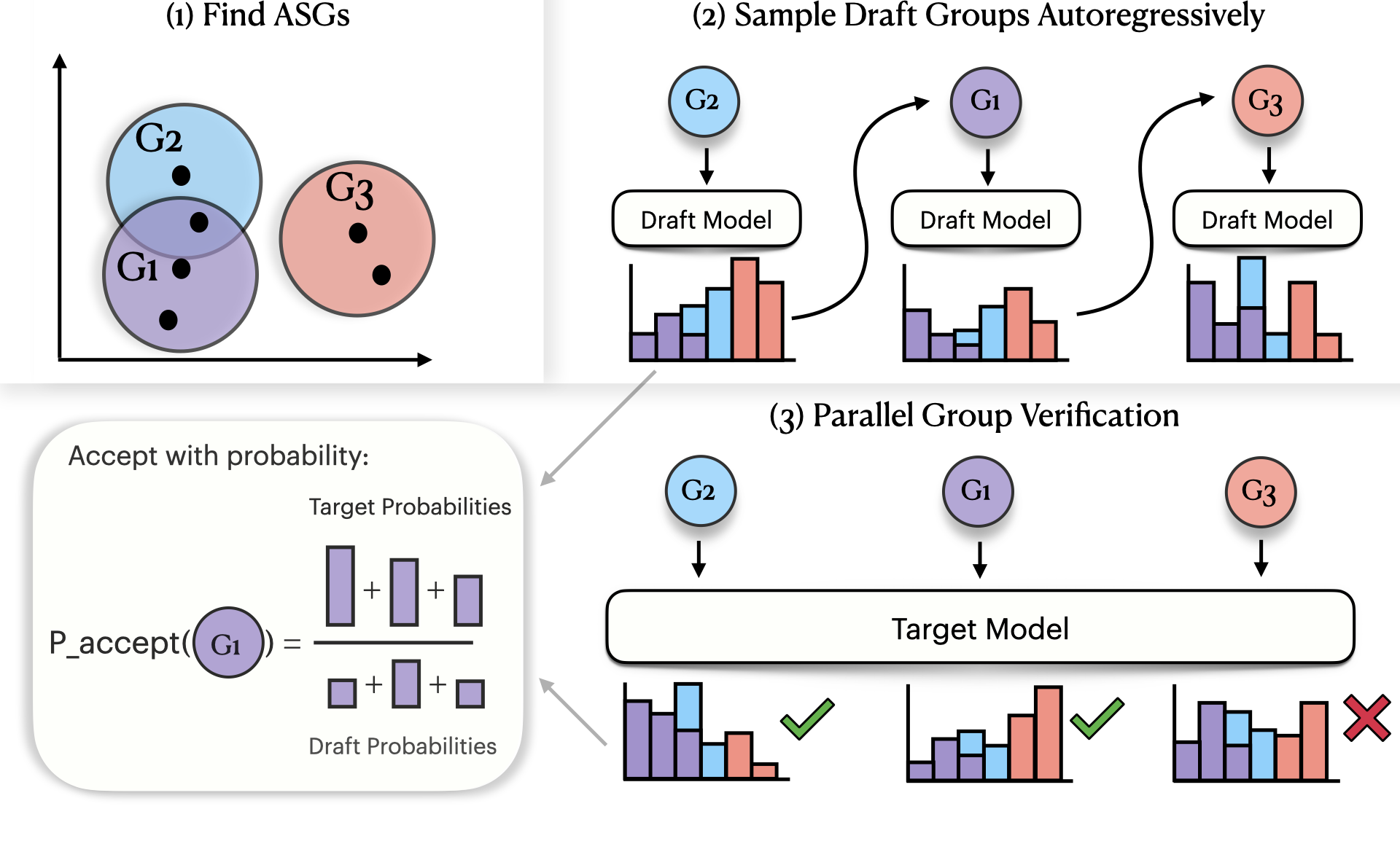}
    \caption{\textbf{Overview of speculative decoding with PCG.} Tokens are first clustered into overlapping ASGs in the target model’s embedding space. The lightweight draft model then autoregressively samples these groups. Finally, the target model verifies all proposed groups in parallel.}
    \label{fig:full_method}

\end{figure*}

We propose \emph{Principled Coarse-Graining} (PCG), a framework that replaces exact token matching with group-level verification. We construct \emph{Acoustic Similarity Groups} (ASGs) in the target model’s token embedding space, capturing its internal organization of semantic and acoustic similarity. PCG performs speculative sampling on the coarse-grained distribution over ASGs and carries out rejection sampling at the group level. This yields a strong guarantee: the sequence of accepted groups is an exact sample from the target model’s coarse-grained distribution at each decoding step. Empirically, draft predictions are often correct at the group level, which substantially raises acceptance while maintaining diversity.

Figure \ref{fig:method} illustrates the intuition. The top row shows draft and target distributions over individual tokens; the bottom row shows the same distributions after coarse-graining into ASGs. Under strict matching a draft sample like token 2 is likely rejected, whereas under PCG its group {1,2,3,4} is accepted because the target assigns high mass to the group even if not to the specific token. Overlap (e.g., token 4 belonging to multiple groups) preserves smooth acoustic neighborhoods.

Our contributions are twofold: we introduce PCG, a group-level speculative decoding scheme that performs rejection sampling on a valid coarse-grained distribution over ASGs, guaranteeing exact coarse-level sampling at each step; and we provide a drop-in AR TTS implementation that raises acceptance and decoding speed while preserving target diversity, with ablations over group granularity and overlap.

\section{Method}
\label{sec:method}

Our framework addresses the challenges of speech token decoding through two key innovations. First, we define ASGs that capture how multiple discrete tokens map to perceptually similar sounds, with tokens belonging to multiple overlapping groups. Second, we introduce PCG, which performs group-level acceptance while accounting for these overlaps—preserving acoustic neighborhoods without arbitrary boundaries. See Figure \ref{fig:full_method} for an overview of speculative decoding with PCG. We denote the target model distribution by $q$ and the draft model distribution by $p$.

\subsection{Acoustic Similarity Groups (ASG)}
\label{sec:method_segs}

To capture the many-to-one mapping between discrete speech tokens and perceived sounds, we group tokens with similar embeddings from the target model $q$. Learned embeddings across text, vision, and audio exhibit rich semantic structure and cross-modal alignment, motivating embedding similarity as a perceptual proxy~\cite{dar-etal-2023-analyzing, 10.1109/TASLP.2023.3288409, 10.5555/3737916.3739168}. For a token $t$, its ASG is
\begin{equation}
\mathcal{G}(t) := \{ t' \in \mathcal{V} : \cos(\text{Emb}_q(t), \text{Emb}_q(t')) > \theta \}.
\label{eq:seg_construction}
\end{equation}
where $\text{Emb}_q(\cdot)$ is the target model's token embedding and $\mathcal{V}$ is the vocabulary of the model. The threshold $\theta$ controls the granularity: lower values create larger, more permissive groups. We find $\theta \in [0.38, 0.45]$ optimal in our experiments (Figure~\ref{fig:tradeoff_curve}).

The collection of distinct ASGs forms our group set $\{\mathcal{G}_k\}_{k=1}^M$, where each $\mathcal{G}_k$ corresponds to a unique equivalence group and $M \leq |\mathcal{V}|$.

\subsection{Principled Coarse-Graining (PCG)}
\label{sec:method_pcg}

ASGs naturally overlap: tokens belong to multiple groups thus preserving acoustic neighborhoods without arbitrary boundaries. However, overlap prevents naive probability aggregation. PCG resolves this by splitting each token's probability across its groups, yielding proper distributions $P_c$ and $Q_c$ for group-level acceptance.

\smallskip\noindent
\textbf{Overlap-aware coarse distributions.}
For a token $t$, let $S(t) := \{k : t \in \mathcal{G}_k\}$ and $N(t) := |S(t)|$. We choose nonnegative weights $w_{k,t}$ for $k \in S(t)$ such that
\begin{equation}
\sum_{k \in S(t)} w_{k,t} = 1, \qquad w_{k,t} = 0 \;\text{if}\; t \notin \mathcal{G}_k.
\label{eq:weights}
\end{equation}
Our default is equal-split,
\begin{equation}
w_{k,t} = \frac{\mathbbm{1}\{ t \in \mathcal{G}_k \}}{N(t)}.
\end{equation}
Given $w_{k,t}$, define the coarse-grained group distributions
\begin{equation}
P_c(\mathcal{G}_k) = \sum_{t \in \mathcal{G}_k} p(t) \cdot w_{k,t},
\qquad
Q_c(\mathcal{G}_k) = \sum_{t \in \mathcal{G}_k} q(t) \cdot w_{k,t}.
\label{eq:PcQc}
\end{equation}
Because the weights in~\eqref{eq:weights} sum to one per token, we have $\sum_k P_c(\mathcal{G}_k) = \sum_t p(t) = 1$ and $\sum_k Q_c(\mathcal{G}_k) = 1$, hence no global normalization constants are needed. Under equal-split,
\begin{equation}
P_c(\mathcal{G}_k) = \sum_{t \in \mathcal{G}_k} \frac{p(t)}{N(t)},
\qquad
Q_c(\mathcal{G}_k) = \sum_{t \in \mathcal{G}_k} \frac{q(t)}{N(t)}.
\end{equation}

\smallskip\noindent
\textbf{Proposal coupling and acceptance.}
At speculative step $i$, draw a draft token $x_i \sim p_i(\cdot)$. Because groups overlap, we then draw a group label $K_i$ from the groups containing $x_i$ using the same per-token weights:
\begin{equation}
K_i \sim w_{k,x_i} \quad\text{over}\quad k \in S(x_i).
\label{eq:group_sampling}
\end{equation}
This coupling ensures $\Pr(K_i = k) = \sum_t p_i(t) \cdot w_{k,t} = P_c^{(i)}(\mathcal{G}_k)$, i.e., the proposal over groups equals the coarse-grained draft distribution at step $i$.

The group-level acceptance ratio is then the standard speculative ratio for the same random variable (the group label):
\begin{equation}
r_i = \min\left(1, \frac{Q_c^{(i)}(\mathcal{G}_{K_i})}{P_c^{(i)}(\mathcal{G}_{K_i})}\right).
\label{eq:pcg_acceptance}
\end{equation}
If $u_i \sim \text{Uniform}(0,1)$ and $u_i < r_i$, we accept the $group$ at position $i$.

\smallskip\noindent
\textbf{Residual sampling via thinning.}
In the reject branch, we must sample from the group-level residual
\begin{equation}
R_c^{(i)}(k) \propto \bigl[\, Q_c^{(i)}(\mathcal{G}_k) - P_c^{(i)}(\mathcal{G}_k) \,\bigr]_+,
\end{equation}
rather than from the token residual \(\bigl[\, q_i - p_i \,\bigr]_+\). We avoid enumerating all groups via thinning: sample \(y \sim q_i(\cdot)\); sample \(K \sim w_{k,y}\) over \(k \in S(y)\) (so \(K \sim Q_c^{(i)}\)); accept \(K\) with probability
\begin{equation}
a(K) = \Bigl[\, 1 - \frac{P_c^{(i)}(\mathcal{G}_K)}{Q_c^{(i)}(\mathcal{G}_K)} \,\Bigr]_+,
\end{equation}
otherwise resample \((y, K)\). Upon acceptance, emit a token within the accepted group according to the \(q\)-conditional
\begin{equation}
\Pr(z = t \mid K) = \frac{q_i(t)\, w_{K,t}}{Q_c^{(i)}(\mathcal{G}_K)}, \qquad t \in \mathcal{G}_K.
\end{equation}
This realizes the exact residual \(R_c^{(i)}\) at the group level while computing \(P_c^{(i)}(\mathcal{G}_K)\) and \(Q_c^{(i)}(\mathcal{G}_K)\) only for sampled groups via Eq.~\eqref{eq:PcQc}. The expected number of thinning trials is \(1/\mathrm{TV}\bigl(P_c^{(i)}, Q_c^{(i)}\bigr)\). In practice, we find that across all our experiments, the average number of trials does not exceed 3 trials.

\smallskip\noindent
\textbf{Guarantee and practical choice within group.}
\begin{proposition*}
Under the coupling \eqref{eq:group_sampling} with weights satisfying \eqref{eq:weights}, the speculative acceptance rule \eqref{eq:pcg_acceptance}, together with a reject branch that samples a group $K'$ from the residual distribution
\[
R_c^{(i)}(k)\ \propto\ \bigl[\,Q_c^{(i)}(\mathcal{G}_k)\;-\;P_c^{(i)}(\mathcal{G}_k)\,\bigr]_+,
\]
ensures that the emitted group at each step is distributed exactly according to $Q_c^{(i)}$.
\end{proposition*}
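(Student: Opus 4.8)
The plan is to reduce the statement to the classical speculative-sampling correctness computation \cite{leviathan2023fastinferencetransformersspeculative,chen2023acceleratinglargelanguagemodel}, but applied to the \emph{group-valued} random variable rather than to tokens, and then to check the two facts that make this reduction legitimate: the proposed group is exactly $P_c^{(i)}$-distributed, and the reject-branch thinning produces exactly the residual $R_c^{(i)}$. Throughout I fix a decoding step $i$ and suppress the superscript.

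\textbf{Step 1: the proposal over groups is $P_c$.} Conditioned on the draft token $x_i \sim p_i$, the group label is $K_i \sim w_{\cdot, x_i}$ supported on $S(x_i)$. Marginalizing over $x_i$ and using $w_{k,t} = 0$ for $t \notin \mathcal{G}_k$,
\[
\Pr(K_i = k) \;=\; \sum_{t} p_i(t)\, w_{k,t} \;=\; \sum_{t \in \mathcal{G}_k} p_i(t)\, w_{k,t} \;=\; P_c(\mathcal{G}_k),
\]
by the definition \eqref{eq:PcQc}. The identical computation with $q_i$ shows that drawing $y\sim q_i$ and then $K\sim w_{\cdot,y}$ yields $K\sim Q_c$. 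Note that the per-token constraint $\sum_{k\in S(t)} w_{k,t}=1$ from \eqref{eq:weights} is exactly what makes $P_c$ and $Q_c$ genuine probability distributions on $\{\mathcal{G}_k\}_{k=1}^M$ with no hidden normalizers; this is the only place the overlap structure enters the argument.

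\textbf{Step 2: standard rejection bookkeeping on $K_i$.} Since $K_i \sim P_c$ and the rule \eqref{eq:pcg_acceptance} accepts it with probability $\min(1, Q_c(\mathcal{G}_{K_i})/P_c(\mathcal{G}_{K_i}))$, the probability of accepting a proposal equal to $k$ is $\min\!\big(P_c(\mathcal{G}_k), Q_c(\mathcal{G}_k)\big)$, so $\Pr(\text{accept}) = \sum_k \min\!\big(P_c(\mathcal{G}_k), Q_c(\mathcal{G}_k)\big)$ and hence $\Pr(\text{reject}) = \sum_k [\,Q_c(\mathcal{G}_k)-P_c(\mathcal{G}_k)\,]_+ = \mathrm{TV}(P_c, Q_c)$. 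This last quantity is precisely the normalizing constant that turns the unnormalized residual $[\,Q_c(\mathcal{G}_k)-P_c(\mathcal{G}_k)\,]_+$ into the distribution $R_c$. Therefore
\[
\Pr(\text{emit }k) = \min\!\big(P_c(\mathcal{G}_k), Q_c(\mathcal{G}_k)\big) + \Pr(\text{reject})\cdot R_c(k) = \min\!\big(P_c(\mathcal{G}_k), Q_c(\mathcal{G}_k)\big) + [\,Q_c(\mathcal{G}_k)-P_c(\mathcal{G}_k)\,]_+ = Q_c(\mathcal{G}_k),
\]
using the elementary identity $\min(a,b) + [\,b-a\,]_+ = b$. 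This already establishes the proposition, modulo verifying that the reject branch does sample from $R_c$.

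\textbf{Step 3: thinning realizes $R_c$.} This is the only place the argument departs from the textbook proof. By Step 1, $y\sim q_i$ followed by $K\sim w_{\cdot,y}$ gives $K\sim Q_c$; accepting with probability $a(K) = [\,1 - P_c(\mathcal{G}_K)/Q_c(\mathcal{G}_K)\,]_+$ makes the law of an accepted draw proportional to $Q_c(\mathcal{G}_k)\cdot[\,1 - P_c(\mathcal{G}_k)/Q_c(\mathcal{G}_k)\,]_+ = [\,Q_c(\mathcal{G}_k)-P_c(\mathcal{G}_k)\,]_+$, i.e.\ proportional to $R_c(k)$; conditioning on acceptance thus gives exactly $R_c$, and the per-trial acceptance probability $\sum_k[\,Q_c(\mathcal{G}_k)-P_c(\mathcal{G}_k)\,]_+ = \mathrm{TV}(P_c,Q_c)$ yields the stated expected trial count $1/\mathrm{TV}(P_c,Q_c)$. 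The within-group emission rule $\Pr(z=t\mid K)$ is irrelevant to the group-level claim and need not be invoked.

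\textbf{Main obstacle.} The substantive point --- the only thing beyond the classical speculative-sampling computation --- is making precise that the \emph{shared} weights in the coupling \eqref{eq:group_sampling} turn the token proposal into the group proposal $P_c$ (Step 1), and symmetrically that thinning produces $Q_c$ before acceptance (Step 3). Once the problem is rephrased on the group alphabet with these two facts in hand, Step 2 is the standard argument verbatim; a secondary point worth stating explicitly is that $P_c$ and $Q_c$ carry no normalization constants, which is exactly what $\sum_{k\in S(t)} w_{k,t}=1$ buys.
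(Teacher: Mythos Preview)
Your proof is correct and follows essentially the same route as the paper: verify that the coupling makes $K_i\sim P_c$, compute the accepted mass as $\min(P_c,Q_c)$, add the residual contribution, and invoke $\min(a,b)+[b-a]_+=b$. You are simply more explicit about the normalization of $R_c$ via the rejection probability, and your Step~3 additionally verifies that the thinning procedure realizes $R_c$, which the proposition as stated already takes as a hypothesis; this extra step is not required but is a useful complement.
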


\begin{proof}
By construction, $\Pr(K_i=k)=P_c^{(i)}(\mathcal{G}_k)$. The acceptance rule \eqref{eq:pcg_acceptance} gives
\[
\Pr(\text{accept } K_i=k)
= \min\bigl(P_c^{(i)}(\mathcal{G}_k),\, Q_c^{(i)}(\mathcal{G}_k)\bigr).
\]
The reject branch emits group $k$ with probability proportional to $[\,Q_c^{(i)}(\mathcal{G}_k)-P_c^{(i)}(\mathcal{G}_k)\,]_+$. Summing the accepted and rejected contributions yields
\begin{multline}
\min\!\bigl(P_c^{(i)}(\mathcal{G}_k),\, Q_c^{(i)}(\mathcal{G}_k)\bigr)
+ [\,Q_c^{(i)}(\mathcal{G}_k)-P_c^{(i)}(\mathcal{G}_k)\,]_+ \\
= Q_c^{(i)}(\mathcal{G}_k).
\end{multline}
Thus the emitted group distribution equals $Q_c^{(i)}$ at each step.
\end{proof}

For efficiency, we retain the draft token $x_i$ as the group representative upon acceptance, preserving KV-cache validity. Crucially, this \emph{preserves the exact group-level distribution $Q_c$}, we accept the correct groups with the correct probabilities, only the intra-group selection follows $p$ rather than $q$. This trades token-level exactness for computational efficiency, which is appropriate for speech where group-level (acoustic) fidelity is paramount. Our adapted speculative decoding algorithm with overlap-aware PCG is in Algorithm \ref{alg:pcg}. 

\begin{algorithm}[t]
\small
\DontPrintSemicolon
\caption{Speculative Decoding with PCG}
\label{alg:pcg}
\KwIn{Draft $p$, target $q$, draft length $L_d$, groups $\mathcal{G}$, weights $w_{k,x}$}
\KwOut{$T_{\mathrm{in}}$}
$T_{\mathrm{in}}\leftarrow[\,]$\;
\For{$i=1$ \KwTo $L_d$}{
  Sample $x_i\sim p_i(\cdot)$\;
  Sample $\mathcal{G}_i\sim w_{k,x_i}$ over $k\in S(x_i)$\;
  Compute $P_c, Q_c$ for $\mathcal{G}_i$ from $p_i,q_i$ via Eq.~(\ref{eq:PcQc})\;
  $r_i\leftarrow \min\bigl(1,\,Q_c/P_c\bigr)$\;
  \lIf{$\mathrm{Uniform}(0,1)<r_i$}{append $x_i$ to $T_{\mathrm{in}}$}
  \lElse{
    $y \leftarrow \textsc{GroupResidualSampler}(p_i, q_i, w, \mathcal{G})$ \hfill{\footnotesize(samples from $R_c^{(i)} \propto [Q_c^{(i)}-P_c^{(i)}]_+$)}\;
    append $y$ to $T_{\mathrm{in}}$\;
    \textbf{break}\;
  }
}
\If{$|T_{\mathrm{in}}|=L_d$}{Sample $x_{L_d+1}\sim q_{L_d+1}(\cdot)$ and append to $T_{\mathrm{in}}$}
\Return{$T_{\mathrm{in}}$}
\end{algorithm}

\smallskip\noindent
\textbf{Complexity and memory.}
PCG adds only an $O(|G_i|)$ operations to compute $P_c^{(i)}$ and $Q_c^{(i)}$ for the sampled group; the $O(n)$ softmax over the vocabulary still dominates. 
In our setup, $\max_k |Average(G_k)|\approx 140$ ($|G_i|\ll n$).

ASGs are precomputed once and cached. Let $n = |\mathcal{V}|$. Storing the group membership lists is $O(n\bar{k})$ integers (plus optional reverse indices). For $n = 65{,}536$ and $\bar{k} \approx 140$ which correspond to the largest group size we experimented with, this is $\sim 9.2$M indices; with 32-bit indices, $\sim 37$MB ($\sim 19$MB with 16-bit indices). This is modest relative to model weights and KV-cache memory. Further reductions are possible via compression, sparsification (top-$K$ neighbors), or on-the-fly approximate search.

\begin{figure}[t]
    \centering
    \includegraphics[width=1\columnwidth]{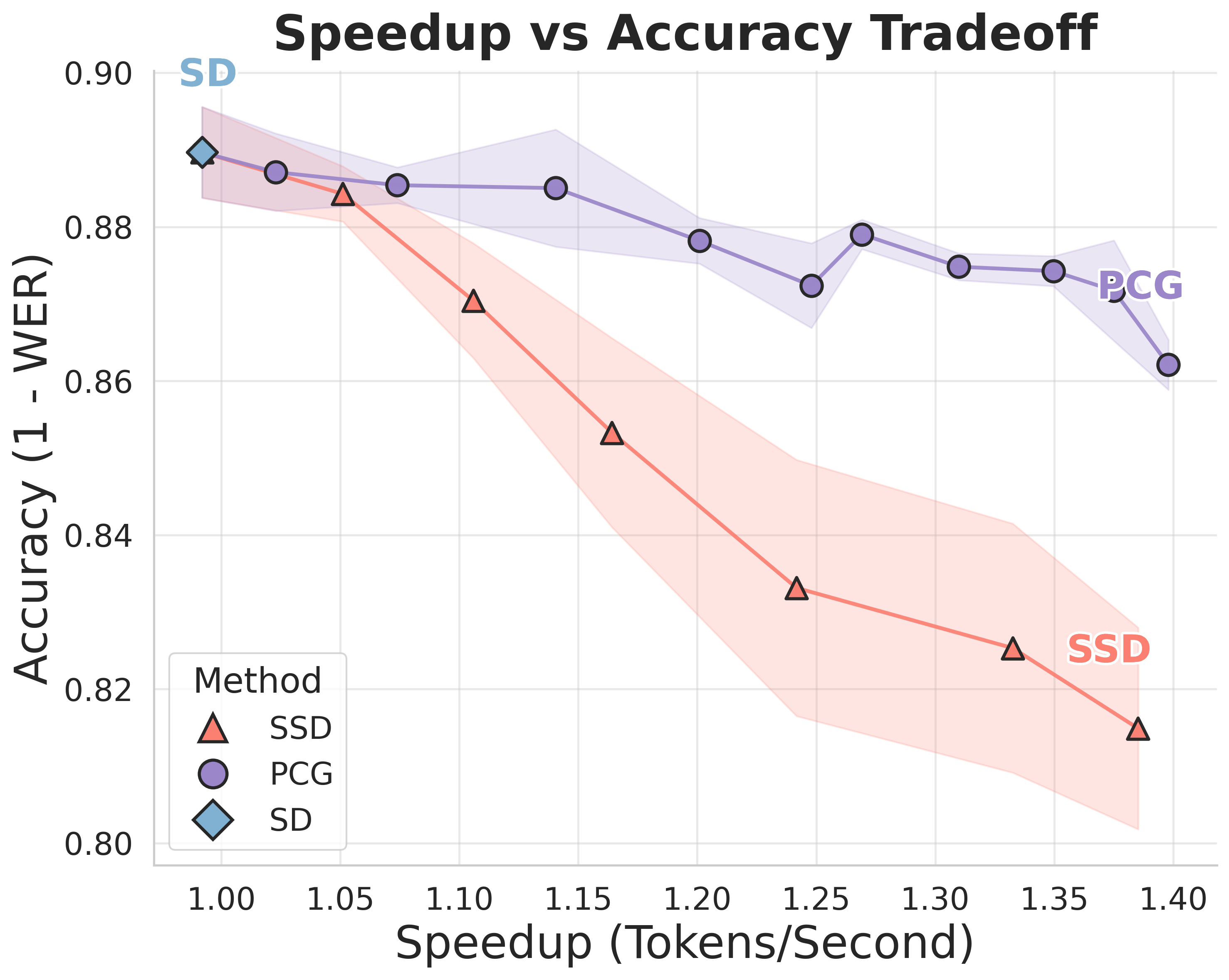}
    \caption{\textbf{Accuracy–Speedup Trade-off Curve on LibriTTS}. While the SSD baseline leads to a sharp drop in accuracy, PCG achieves a more favorable balance between speedup and accuracy. SD denotes standard speculative-decoding.}
    \label{fig:tradeoff_curve}
\end{figure}

\begin{table*}[t]
\centering
\renewcommand{\arraystretch}{1.05} %
\setlength{\tabcolsep}{7pt} %
\begin{tabular}{lcccccc}
\toprule
\textbf{Method}& \textbf{Speedup} $\uparrow$  & \textbf{WER} $\downarrow$ & \textbf{CER} $\downarrow$ & \textbf{Sim-O} $\uparrow$ & \textbf{NMOS} $\uparrow$ \\
\midrule
Draft & 5.2$\times$ & 52.8 $\pm$ 1.6& 41.4 $\pm$ 1.8 & 36.3 $\pm$ 1.1 & - \\
Target + SD & 0.98$\times$ & 11.1 $\pm$ 0.6 &  5.5 $\pm$ 0.5 & 43.7 $\pm$ 0.3 & 4.38 $\pm$ 0.88 \\
\midrule
Target + SSD~\cite{lin25h_interspeech} & 1.4$\times$ & 18.5 $\pm$ 1.9 &  11.6 $\pm$ 1.7 & 42.5 $\pm$ 0.4 & 3.78 $\pm$ 1.21 \\
Target + PCG & 1.4$\times$ & \textbf{13.8 $\pm$ 0.4} & \textbf{7.8$\pm$ 0.3} & \textbf{43.7 $\pm$ 0.1} & \textbf{4.09 $\pm$ 1.13} \\

\bottomrule
\end{tabular}
\caption{Evaluation results on LibriTTS (test-clean) for a fixed speedup factor. Metrics: Word Error Rate (WER), Character Error Rate (CER), Speaker similarity (Sim-O), Naturalness MOS (NMOS). NMOS scores on a 1--5 scale. }
\label{tab:libritts_results}
\end{table*}

\section{Experiments}
\subsection{Experimental Details}

\label{sec:exp_details}
We use LLaSA-8B~\cite{ye2025llasascalingtraintimeinferencetime} as the target model, built on LLaMA-8B~\cite{touvron2023llamaopenefficientfoundation}. It employs the X-codec2 tokenizer with FSQ-style discretization~\cite{mentzer2023finite} and a 65{,}536-entry codebook, and was chosen for its minimalist, strong architecture.

Following \cite{lin25h_interspeech}, the draft model is a 3-layer subset of the LLaSA target, initialized from LLaSA-8B parameters and trained on Libri-heavy~\cite{kang2024libriheavy50000hoursasr} (50{,}000 hours of read English). The draft was not heavily optimized (no knowledge distillation); all methods should improve with a stronger draft.

At inference, we use temperature 0.8, a speculation lookahead of 3 tokens, and precompute ASGs offline. Speedups are reported in tokens per second in wall-clock times. All experiments run on a single NVIDIA H100-80GB GPU.

\subsection{Evaluation Settings}
Following prior work \cite{du2024unicats,lin25h_interspeech}, we evaluate on 500 randomly sampled utterances from LibriTTS test-clean in a cross-sentence zero-shot speech-cloning setup. For automatic evaluation, we report Word Error Rate (WER) and Character Error Rate (CER) using a HuBERT-large ASR, and speaker-similarity scores computed with WavLM~\cite{chen2022wavlm}, averaged over 3 random seeds. Results report $\pm$ standard deviation.

For human perceptual assessment, we conduct a 5-point NMOS study (1 = unacceptable, 5 = excellent). Four participants each rate 85 randomly selected samples for each method. NMOS captures overall speech quality by jointly evaluating intelligibility and naturalness

We compare against two baselines: standard speculative decoding and SSD~\cite{lin25h_interspeech}, which relaxes acceptance by adding a constant bias to $r_i$ to trade accuracy for speed (analogous to our threshold $\theta$). To our knowledge, SSD is the only prior speech-specific method that avoids top-$k$ restrictions while attempting to approximate the target distribution; unlike our approach, it does not provide exact distributional guarantees, whereas our method preserves exact sampling at the ASG level. Finally, we include an ablation that replaces cosine similarity with Mel-spectrogram similarity to show that gains arise from similarity in the model’s token-embedding space, not merely from acoustic resemblance.

\begin{table}[t]
\centering
\begin{tabular}{@{}ccccc@{}}
\toprule
\textbf{Avg. $|\mathcal{G}|$} & \textbf{Swap \%} & \textbf{$\Delta$WER} & \textbf{$\Delta$Sim-O} & \textbf{Speedup$^\dagger$} \\ \midrule
1.0   & 0.0  & +0.000 & 0.000  & 1.0$\times$ \\ \midrule %
4.5   & 75.0 & +0.004 & -0.014 & 1.18$\times$ \\
10.3  & 91.4 & +0.007 & -0.027 & 1.23$\times$ \\
\bottomrule
\end{tabular}
\caption{Ablation on intra-group token substitution. Columns report average group size ($|\mathcal{G}|$), percentage of tokens swapped, changes in WER and Sim-O relative to ground truth, and relative decoding speedup. Minimal degradation supports the ASG assumption. $^\dagger$speedup, refers to the speedup measured when coupled with SD.}

\label{tab:ablation_seg}
\end{table}

\section{Results}
\label{sec:results}

Figure \ref{fig:tradeoff_curve} shows the speed–accuracy trade-off and Table \ref{tab:abl_similarity} shows that grouping by cosine similarity in the target embedding space yields lower WER than Mel-spectrogram similarity at the same speedup with comparable speaker similarity.
Table \ref{tab:libritts_results} summarizes the main results: Target+SD yields negligible speedup (0.98$\times$) while achieving the lowest WER/CER; SSD with 0.3 bias term, raises throughput to 1.4× but worsens WER/CER to 18.5/11.6. PCG, at the same 1.4× speedup, reduces WER/CER to 13.8/7.8 and improves speaker similarity and NMOS versus SSD. These gains are statistically significant (t-test $p<0.05$; PCG vs SSD $p=0.039$).

To test the interchangeability assumption within a ASG, we run a stress-test. Starting from target-generated token sequences, and for similarity thresholds $\theta$ used to construct ASGs, we replace every token that belongs to a multi-member group ($|G|>1$) with a uniformly sampled alternative from its ASG, ignoring the target’s token probabilities—so most positions are rewritten. We then measure degradation against original target utterances (Section \ref{sec:exp_details}).

Results (Table \ref{tab:ablation_seg}) support the hypothesis. With a relaxed threshold that would yield $\approx$1.23$\times$ speedup under SD+PCG, we swap 91.4\% of tokens (only 8.6\% unchanged), yet see only a small drop: WER +0.007 and speaker similarity -0.027 relative to the target baseline. This indicates that ASGs capture genuine acoustic similarity and justifies our group‑based acceptance criterion.

Finally, in Figure \ref{fig:lookahead_abl} we ablate the number of lookahead speculative tokens predicted by the draft model for SD with PCG. The plot shows that 3 tokens provide the best speedup.

\vspace{-0.3em}

\begin{table}[t]
\centering
\renewcommand{\arraystretch}{1.05}
\begin{tabular}{lccc}
\toprule
\textbf{Method} & \textbf{WER} $\uparrow$ &\textbf{CER} $\downarrow$ & \textbf{Sim-O} $\uparrow$ \\
\midrule
PCG (MEL)    &  13.9 & 7.9 &  \textbf{43.7} \\
PCG (Cosine) & \textbf{12.6} &  \textbf{6.6} &  \textbf{43.7} \\
\bottomrule
\end{tabular}
\caption{Comparison of PCG with cosine similarity and MEL-spectrogram similarity using WER and Sim-O metrics.}
\vspace{-0.2em}
\label{tab:abl_similarity}
\end{table}

\begin{figure}[t]
    \centering
    \includegraphics[width=1\columnwidth]{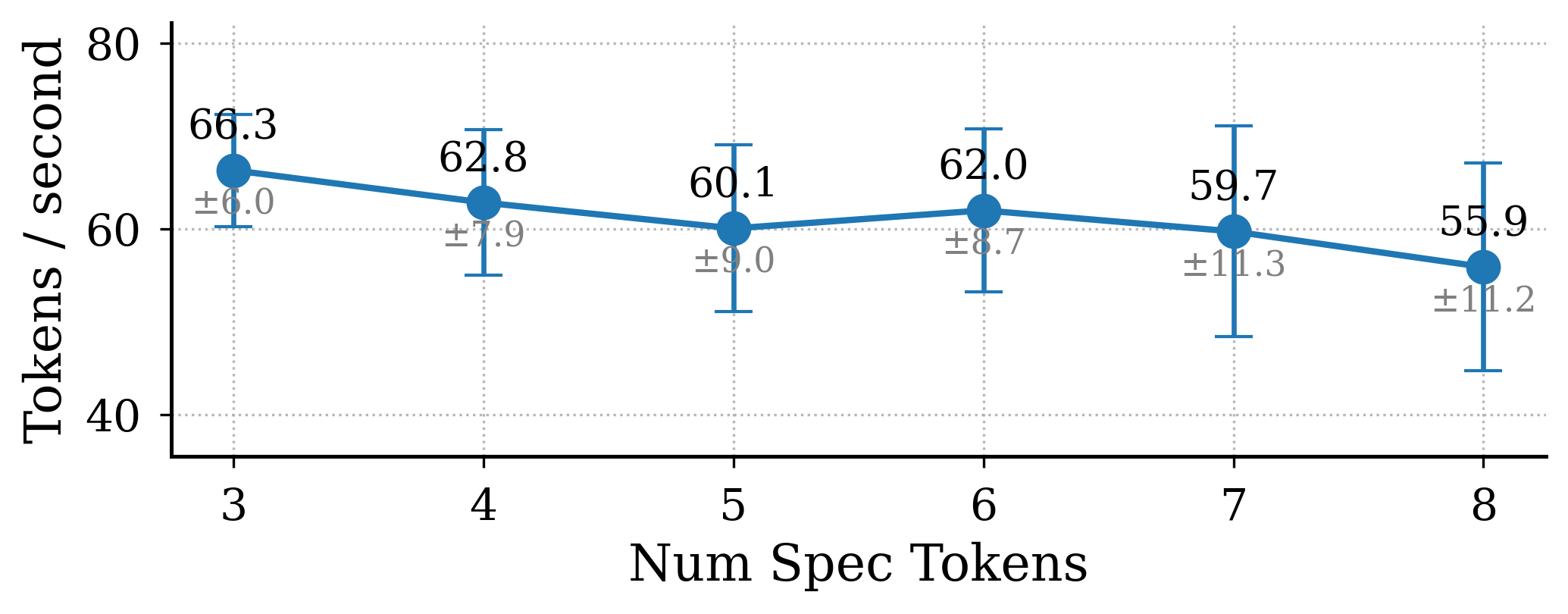}
    \caption{Ablation study on the effect of the number of speculated tokens on the speedup of SD with PCG.}
    \label{fig:lookahead_abl}
 \vspace{2.1em}  

\end{figure}

\section{Conclusion}

We introduced PCG to address inefficient speculative decoding in speech token generation. PCG replaces strict token-matching with a group-level acceptance criterion based on Acoustic Similarity Groups. Our method is theoretically grounded, guaranteeing the sequence of accepted acoustic concepts is sampled correctly from a coarse-grained version of the target distribution at each step. An efficient implementation of PCG achieves speedups while better preserving generation quality, outperforming baseline adaptations of speculative decoding to speech. This work demonstrates the effectiveness of principled, semantically-aware sampling for accelerating speech token generation.

\bibliographystyle{IEEEbib}
\bibliography{strings,refs}

\end{document}